\begin{document}
\mainmatter              % start of the contribution
\title{Backscatter-assisted Relaying in Wireless Powered Communications Network}
\titlerunning{ }  % abbreviated title (for running head)
%                                     also used for the TOC unless
%                                     \toctitle is used
%
\author{Yuan Zheng \and Suzhi Bi
 \and Xiaohui Lin}
\authorrunning{ }   % abbreviated author list (for running head)
%
%%%% list of authors for the TOC (use if author list has to be modified)
\tocauthor{Suzhi Bi, Xiaohui Lin}

\institute{College of Information Engineering, Shenzhen University,\\
Shenzhen,  Guangdong, 518060, China\\
\email{zhengyuan2016@email.szu.edu.cn,{bsz,xhlin}@szu.edu.cn}}

\maketitle

\begin{abstract}        % give a summary of your paper
This paper studies a novel cooperation method in a  two-user wireless powered communication network (WPCN), in which one hybrid access point (HAP) broadcasts wireless energy to two distributed wireless devices (WDs), while the WDs use the harvested energy to transmit their independent information to the HAP. To tackle the user unfairness problem caused by the near-far effect in WPCN, we allow the WD with the stronger WD-to-HAP channel to use part of its harvested energy to help relay the other weaker user's information to the HAP. In particular, we exploit the use of backscatter communication during the wireless energy transfer phase such that the helping relay user can harvest energy and receive the information from the weaker user simultaneously. We derive the maximum common throughput performance by jointly optimizing the time duration and power allocations on wireless energy and information transmissions. Our simulation results demonstrate that the backscatter-assisted cooperation scheme can effectively improve the throughput fairness performance in WPCNs.
%                         please supply keywords within your abstract
\end{abstract}

\section{Introduction}
Wireless communication is fundamentally constrained by the limited battery life of wireless devices. Frequent battery replacement/recharging will interrupt wireless communication and degrade the quality of communication service. Alternatively, radio frequency (RF) enabled wireless energy transfer (WET) technology can supply continuous and sustainable energy to remote WDs. Its application in wireless communication introduces a new networking paradigm, named wireless powered communication network (WPCN). Recent studies have shown that its deployment can largely reduce the network operational cost, and effectively improve the communication performance, e.g., achieving longer operating time and more stable throughput \cite{2014:Bi,2016:Bi1,2014:Ju1,2016:Bi2,2017:Bi,2018:Bi,2015:Bi}. For example, \cite{2014:Ju1} proposed a harvest-then-transmit protocol in WPCN, where one hybrid access point (HAP) with single antenna first transfer RF energy to all WDs in the downlink (DL), and then the WDs transmit information to the HAP in the uplink (UL) using their received energy in a time-division-multiple-access (TDMA) manner. It is observed in \cite{2014:Ju1} that the WPCN suffers from a doubly near-far problem among WDs in different locations, where a far user from the HAP achieves low throughput because they receive less energy and need more power to transmit information. To solve the doubly near-far problem and improve user fairness, several different user cooperation schemes have been proposed \cite{2014:Ju2,2017:MM,2017:Yuan,2017:Wu}. For instance, \cite{2014:Ju2} proposed a two-user cooperation, where the near user helps relay the far user's information to the HAP. \cite{2017:MM} allows two cooperating users to from a distributed virtual antenna array. \cite{2017:Yuan} considered a cluster-based user cooperation, where a multi-antenna HAP applies WET to power a cluster of remote WDs and receives their data transmissions.

 A major design issue of the existing user cooperation schemes is that the overhead (both energy and time) consumed on information exchange between the collaborating users. Alternatively, the recent development of ambient backscatter (AB) communication provides an alternative to reduce such collaborating overhead. Specifically, AB enables a WD to transmit information passively to another device in the vicinity by backscattering the RF signal in the environment, e.g., WiFi and cellular signals, thus achieving device battery conservation. Several recent studies have devoted to improve the data rate of AB, such as proposing new signal detection method and AB communication circuit designs \cite{2013:Vli,2016:Gw}. However, the performance of conventional ambient backscatter communication greatly depends on the conditions of time-varying ambient RF signal, which is not controllable in either its strength or time availability.
\begin{figure}
\vspace{-0.8cm}
\setlength{\abovecaptionskip}{-0.2cm}   %
\setlength{\belowcaptionskip}{-1cm}
  \centering
   \begin{center}
      \includegraphics[width=0.7\textwidth]{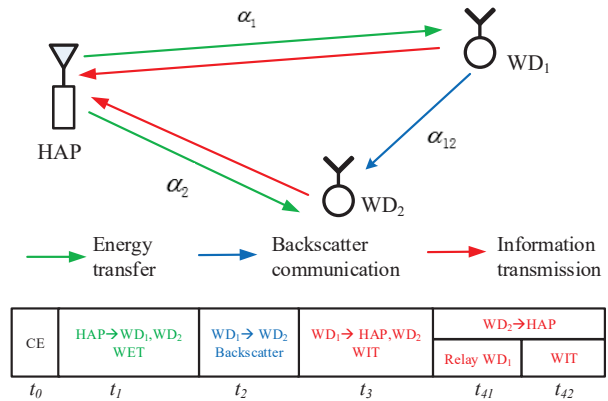}
   \end{center}
  \caption{A two-user WPCN and transmission protocol for user cooperation. }
  \label{Fig.1}
\end{figure}

In this paper, we consider a novel user cooperation method in WPCN which uses backscatter communication. As shown in Fig.~\ref{Fig.1}, we consider two wirelessly powered WDs that harvest RF energy in the DL and transmit cooperatively their information to the HAP in the UL. Unlike in conventional cooperation in WPCN where one WD transmits its information actively to the helping WD, we reuse the WET signal for achieving simultaneous information transmission in a passive manner. This largely saves the collaborating overhead. Besides, compared to conventional AB communication, the use of WET is fully controllable in the RF signal strength and transmission time. With the proposed backscatter-assisted cooperation method, we formulate a rate optimization problem that maximizes the minimum throughput between the two WDs, by jointly optimizing the system transmit time allocation and the power allocations of energy-constrained WDs. Efficient algorithm is  proposed to solve the optimization optimally. Simulation results show that, compared to conventional cooperation based on active communication, the proposed passive cooperation can effectively enhance the throughput performance of energy-constrained devices in WPCN.

\section{System Model}
\subsection{Channel Model}

As show in Fig.~\ref{Fig.1}, we consider a WPCN consisting of one HAP and two users denoted by WD$_1$ and WD$_2$, where the WDs harvest RF energy in the DL and transmit wireless information in the UL. It is assumed that each device is equipped with one antenna and both WET and WIT operate over the same frequency band. We assume that the channel reciprocity holds between the DL and UL, the channel coefficient between the HAP and WD$_i$ is denoted as $\alpha_i$ and the channel power gain is denoted as $h_i=|\alpha_i|^2, i=1,2$. Besides, the channel coefficient between WD$_1$ and WD$_2$ is denoted as $\alpha_{12}$ with the channel power gain $h_{12}=|\alpha_{12}|^2$. We assume without loss of generality that WD$_2$ has a better WD-to-HAP channel than WD$_1$, so which acts as a relay to forward the message of WD$_1$ to the HAP.
\begin{figure}
\vspace{-0.6cm}
\setlength{\abovecaptionskip}{-0.2cm}
\setlength{\belowcaptionskip}{-1cm}
  \centering
   \begin{center}
      \includegraphics[width=0.7\textwidth]{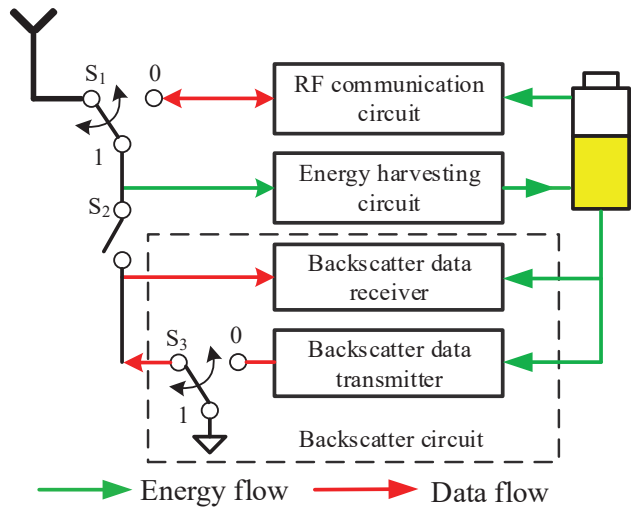}
   \end{center}
  \caption{Circuit block diagram of backscatter wireless user. }
  \label{Fig.2}
\end{figure}

In this paper, we consider that the two WDs can transmit information in both active (RF communication) and passive modes (backscatter communication). The circuit block diagram of two users is shown in Fig.~\ref{Fig.2}. With the two switches $S_1$ and $S_2$, the two WDs can switch flexibly among three operating mode as follows.
\begin{enumerate}
  \item \emph{Backscatter Mode} ($S_1=1$ and S$_2$ is closed): in this case, the antenna is connected to backscatter communication and energy harvesting circuits. A WD transmits information passively by backscattering the received RF signal. Specifically, a WD transmits ``1" or  ``0" by switching S$_3$ between reflecting or absorbing state, respectively. Accordingly, a backscatter receiver uses non-coherent detection techniques, e.g., energy detector\cite{2012:MAK}, to decode the transmitted bit. Notice that the energy consumption on the operation of backscatter transmitter can be well neglected due to the harvested energy during the absorbing state \cite{2017:DTH}.
  \item \emph{RF Communication Mode} ($S_1=0$): the antenna is connected to the RF communication circuit and the user can transmit or receive information using conventional RF wireless communication techniques. Here, the transmission energy consumption is supplied by the RF energy harvested from the HAP.
  \item \emph{Energy-harvesting Mode} ($S_1=1$ and S$_2$ is open): the antenna is connected to the energy harvesting circuit, which can convert the received RF signal to DC energy and store in a rechargeable battery. The energy is used to power the operations of all the other circuits.
\end{enumerate}

\subsection{Protocol Description}
  As shown in Fig.~\ref{Fig.1}, channel estimation (CE) is first performed with a fixed duration $t_0$, such that a central control point (such as the HAP) is aware of the channel coefficients $\{\alpha_1,\alpha_2,\alpha_{12}\}$. After CE, the system operates in four phases. In the first phase of duration $t_1$ the HAP broadcasts wireless energy in the DL with fixed transmit power $P_1$, while both the WDs harvest RF energy. In the second phase of duration $t_2$, the HAP continues to broadcast energy while WD$_1$ uses its backscatter communication circuit to transmit its information to the WD$_2$. Here, we assume the HAP neglects the backscattered signal due to the hardware constraint. Then, in the third phase, WD$_1$ operates in the conventional RF communication mode to transmit its information, using the harvested energy to WD$_2$. Notice that the HAP can overhear the RF transmission of WD$_1$ during this phase. In the last phase of length $t_4$, WD$_2$ first relays the user WD$_1$'s information to the HAP with average power $P_{41}$ over $t_{41}$ amount of time, and then transmits its own information to the HAP using its harvested energy with average power $P_{42}$ over $t_{42}$ amount of time, respectively, where $t_4=t_{41}+t_{42}$. Notice that we have a total time constraint
\begin{equation}
\label{t}
\small
t_0+t_1+t_2+t_3+t_{41}+t_{42}\leq T.
\end{equation}
For convenience, we assume $T=1$ in the sequel without loss of generality.
\section{Throughput Performance Analysis}
During the DL phase, the HAP transmits energy signal with the fixed power $P_1$ in $t_1$ amount of time. It is assumed that the energy harvested from the receiver noise is negligible. Hence, the amount of energy harvested by WD$_1$ and WD$_2$ can be expressed as \cite{2016:Bii}
\begin{equation}\label{energy}
  E_1^{(1)}={\eta}{t_1}{P_1}{h_1},\  E_2^{(1)}={\eta}{t_1}{P_1}{h_2},
\end{equation}
where $0\textless \eta \textless 1$ denotes the energy harvesting efficiency assumed fixed and equal for each user.

In the second stage of duration $t_2$, WD$_1$ uses backscatter communication to transmit its information to WD$_2$. Let $x_2(t)$ denote the transmitted energy signal by the HAP with $E[|x_2(t)|^2] =1$. We assume a fixed backscattering data rate $R_b$ bits/second, thus the duration of transmitting a bit is $1/R_b$ second. In particular, when WD$_1$ transmits a bit ``0'', WD$_2$ receives only the energy signal from the HAP
\begin{equation}
y_{2,0}^{(2)}(t) = \alpha_2 \sqrt{P_1} x_2(t) + n_2^{(2)}(t).
\end{equation}

Otherwise, when WD$_1$ transmits a bit ``1'', the received signal at WD$_2$ is a combination of both the HAP's energy signal and the reflected signal from WD$_1$, where
\begin{equation}
y_{2,1}^{(2)}(t) =\alpha_2 \sqrt{P_1}x_2(t) +  \mu\alpha_1\alpha_{12}\sqrt{P_1}x_2(t) + n_2^{(2)}(t),
\end{equation}
where $\mu$ denotes the signal attenuation coefficient due to the reflection at WD$_1$, $n_2^{(2)}(t)$ denotes the receiver noise at WD$_2$.
\begin{figure}
\vspace{-0.4cm}
\setlength{\abovecaptionskip}{-0.2cm}
\setlength{\belowcaptionskip}{-1cm}
  \centering
   \begin{center}
      \includegraphics[width=0.6\textwidth]{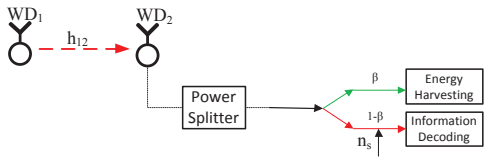}
   \end{center}
  \caption{Power Splitting Scheme in backscattering stage. }
  \label{Fig.3}
\end{figure}

Specifically, as shown in Fig.~\ref{Fig.3}, we apply a power splitting scheme, where the received RF signal is split into two parts. We denote $\beta\in[0,1]$ as the splitting factor, such that $\beta$ part of the signal power is harvested by the device, while the rest $(1-\beta)$ of the signal power is used for information decoding (ID). For simplicity, $\beta$ is assumed a constant in this paper. The information decoding circuit introduces an additional noise $n_s(t)$, which is assumed independent of the antenna noise $n_2(t)$. Thus, the signal at energy decoder and information decoder can be expressed as
\begin{equation}
y_{2,E}^{(2)}(t)=\sqrt{\beta}y_{2}^{(2)}(t),
y_{2,I}^{(2)}(t)=\sqrt{1-\beta}y_{2}^{(2)}(t)+n_s(t),
\end{equation}
where $y_2^{(2)}(t) = y_{2,0}^{(2)}(t)$ when sending ``0" and  $y_2^{(2)}(t) = y_{2,1}^{(2)}(t)$ when sending ``1". Without loss of generality, we assume that ``0" and ``1" are transmitted with equal probability. The harvested energy by WD$_2$ can be expressed as
\begin{equation}
E_2^{(2)} =\eta t_2\beta \frac{1}{2}((E[|y_{2,0}^{(2)}(t)|^2] +E[|y_{2,1}^{(2)}(t)|^2] ))
=\frac{1}{2}\eta t_2\beta  P_1(h_2+|\alpha_2+ \mu \alpha_1 \alpha_{12}|^2).
\end{equation}
Here, we assume that the signals received directly from the HAP and that reflected from WD$_1$ are uncorrelated due the the random phase change during backscatter. Meanwhile, we assume that WD$_1$ maintains its battery level unchanged during the backscatter stage, where the small amount of energy harvested is consumed on the on-off operations of the backscatter switches.

We denote the sampling rate of WD$_2$'s backscatter receiver as $NR_b$, such that it takes $N$ samples during the transmission of a bit, either ``0'' or ``1''. The following lemma derives the bit error probability (BER) of using an optimal energy detector to decode the received one-bit information.

\begin{lemma}
 The BER $\epsilon$ using an optimal energy detector for the backscatter communication is
 \begin{equation}
 \label{Pb}
\epsilon={\frac{1}{2}}erfc[{\frac{(1-\beta)P_1\mu^2 h_1h_{12}\sqrt{N}}{4((1-\beta)N_0+N_s)}}].
\end{equation}
\end{lemma}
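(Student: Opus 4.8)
The plan is to set up a binary hypothesis testing problem for the energy detector and then approximate the resulting detection statistics by Gaussians so that the error probability takes the standard $\tfrac{1}{2}\mathrm{erfc}(\cdot)$ form. First I would write down the $N$ complex baseband samples collected during one bit interval. Under hypothesis $H_0$ (bit ``0''), the information-decoding branch sees $y_{2,I}^{(2)}[k] = \sqrt{1-\beta}\,\alpha_2\sqrt{P_1}\,x_2[k] + \sqrt{1-\beta}\,n_2^{(2)}[k] + n_s[k]$, whereas under $H_1$ (bit ``1'') there is an extra term $\sqrt{1-\beta}\,\mu\alpha_1\alpha_{12}\sqrt{P_1}\,x_2[k]$. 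The key observation is that, because $x_2$ carries unit power and is deterministic-in-power, the ``signal'' contribution to the received power under $H_1$ exceeds that under $H_0$ by $(1-\beta)P_1\mu^2 h_1 h_{12}$ (using $h_1=|\alpha_1|^2$, $h_{12}=|\alpha_{12}|^2$, and treating the direct HAP term $\alpha_2\sqrt{P_1}x_2$ as a known/common component that the detector can subtract, or absorbing it consistently in both hypotheses). The energy detector forms $T = \sum_{k=1}^{N} |y_{2,I}^{(2)}[k]|^2$ and compares it to a threshold.

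Next I would compute the mean and variance of $T$ under each hypothesis. The noise variance in the ID branch is $(1-\beta)N_0 + N_s$, where $N_0$ and $N_s$ are the power spectral densities (variances per sample) of $n_2^{(2)}$ and $n_s$ respectively. Then $E[T|H_0]$ and $E[T|H_1]$ differ by $N(1-\beta)P_1\mu^2 h_1 h_{12}$, while the per-sample variance of $|y|^2$ is on the order of $((1-\beta)N_0+N_s)^2$ in each case (to leading order, i.e., ignoring the small signal-cross-noise term, consistent with a low-SNR backscatter regime), so $\mathrm{Var}(T|H_j)\approx N((1-\beta)N_0+N_s)^2$. Invoking the central limit theorem for $N$ large, $T$ is approximately Gaussian under both hypotheses with (approximately) equal variances; with equal priors the optimal threshold is the midpoint of the two means, and the BER becomes $\tfrac{1}{2}\mathrm{erfc}\!\left(\tfrac{E[T|H_1]-E[T|H_0]}{2\sqrt{2}\,\sqrt{\mathrm{Var}(T)}}\right)$. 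Substituting the mean gap and the variance gives
\begin{equation*}
\epsilon \approx \tfrac{1}{2}\,\mathrm{erfc}\!\left[\frac{N(1-\beta)P_1\mu^2 h_1 h_{12}}{2\sqrt{2}\,\sqrt{N}\,\bigl((1-\beta)N_0+N_s\bigr)}\right] = \tfrac{1}{2}\,\mathrm{erfc}\!\left[\frac{(1-\beta)P_1\mu^2 h_1 h_{12}\sqrt{N}}{2\sqrt{2}\,\bigl((1-\beta)N_0+N_s\bigr)}\right],
\end{equation*}
which matches \eqref{Pb} up to the constant in the denominator; I would reconcile the factor (the paper writes $4$ where the plain Gaussian argument gives $2\sqrt{2}$) by tracking the exact variance normalization, since the two hypotheses do not have perfectly equal variances and a more careful average-variance or Q-function-to-erfc conversion shifts the constant.

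The main obstacle I anticipate is the variance bookkeeping and the precise optimality argument for the energy detector: strictly, $|y_{2,I}^{(2)}[k]|^2$ is a scaled noncentral chi-square under $H_1$ and central chi-square under $H_0$, so the ``optimal'' detector is a likelihood-ratio test rather than a fixed-threshold energy test, and the exact error probability involves Marcum-$Q$ functions. Justifying the Gaussian approximation (large $N$, low backscatter SNR so that the noncentrality is small relative to the noise floor) and showing that in this regime the LRT reduces to comparing $T$ to the midpoint threshold is the delicate part; everything after that — plugging in the signal-power gap $(1-\beta)P_1\mu^2 h_1 h_{12}$ and the noise variance $(1-\beta)N_0+N_s$, and simplifying $N/\sqrt{N}=\sqrt{N}$ — is routine algebra. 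I would also need to state explicitly the assumption that the direct-path term $\alpha_2\sqrt{P_1}x_2$ is either known and cancelled or otherwise does not bias the comparison, since it is common to both hypotheses and only the reflected component distinguishes ``0'' from ``1''.
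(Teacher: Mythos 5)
The paper itself gives no derivation for this lemma (the proof is explicitly omitted ``due to the page limitation''), so there is nothing to compare line by line; judging your proposal on its own merits, you have chosen the standard and almost certainly intended route --- Gaussian approximation of the energy-detector statistic $T=\sum_{k=1}^{N}|y_{2,I}^{(2)}[k]|^2$ under the two hypotheses, midpoint threshold for equal priors, and conversion to an $\mathrm{erfc}$ expression --- and your mean-gap term $(1-\beta)P_1\mu^2h_1h_{12}$ and noise level $(1-\beta)N_0+N_s$ match the lemma. However, two gaps remain genuinely unresolved. First, the variance bookkeeping around the direct HAP term is not just a matter of ``absorbing it consistently in both hypotheses'': if the strong component $\sqrt{1-\beta}\,\alpha_2\sqrt{P_1}x_2[k]$ is left in the samples, the signal-times-noise cross term contributes roughly $2(1-\beta)P_1h_2\bigl((1-\beta)N_0+N_s\bigr)$ per sample to $\mathrm{Var}(T)$, which for any realistic downlink SNR dominates the noise-only term $\bigl((1-\beta)N_0+N_s\bigr)^2$ and would change the denominator of the lemma entirely; the stated result only follows if the known energy waveform is subtracted (or the cross term is explicitly neglected), and similarly the mean gap equals $\mu^2h_1h_{12}$ rather than $|\alpha_2+\mu\alpha_1\alpha_{12}|^2-h_2$ only by invoking the paper's assumption that the direct and backscattered signals are uncorrelated due to the random phase change at WD$_1$ --- you gesture at this but do not pin it down.

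Second, the constant: your argument yields $2\sqrt{2}$ in the denominator and you leave the reconciliation with the paper's $4$ as a promissory note. This is not a rounding convention but a modeling choice you need to commit to: with complex-Gaussian noise of power $\sigma^2=(1-\beta)N_0+N_s$ the per-sample variance of the (noise-only) energy is $\sigma^4$ and you indeed get $\epsilon=\tfrac12\mathrm{erfc}\bigl[\sqrt{N}\delta/(2\sqrt{2}\sigma^2)\bigr]$, whereas with real baseband sampling (as in the energy-detector reference the paper cites) the per-sample variance is $2\sigma^4$, so $\mathrm{Var}(T)\approx 2N\sigma^4$ and the midpoint-threshold BER becomes $\tfrac12\mathrm{erfc}\bigl[\sqrt{N}\delta/(4\sigma^2)\bigr]$, which is exactly \eqref{Pb}. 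To close the proof you must fix the sampling model (real samples, or equivalently account for both quadratures), state the low-backscatter-SNR and large-$N$ conditions under which the noncentral chi-square statistics are Gaussian with approximately equal variances and the LRT degenerates to the midpoint energy test, and only then does the algebra you call routine deliver the claimed constant.
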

\begin{proof}
 Due to the page limitation, the derivation is omitted here.
 \end{proof}

Then, the communication can be modeled as a binary symmetric channel, whose capacity (in bit per channel use) can be expressed as
\begin{equation}
C = 1+\epsilon log\epsilon+(1-\epsilon )log(1-\epsilon).
\end{equation}
Accordingly, the effective data rate from WD$_1$ to WD$_2$ is
\begin{equation}
R_1^{(1)}(\mathbf{t}) = C R_b t_2.
\end{equation}

 Within the sequel $t_3$ amount of time, WD$_1$ uses the harvested energy to actively transmit its information. By exhausting its harvested energy on WIT, the average transmit power of WD$_1$ is given by
 \begin{equation}
 \label{P3}
P_3=E_1^{(1)}/{t_3}={\eta}{P_1}{h_1}{t_1}/{t_3}.
\end{equation}
We denote $x_3(t)$ as the complex base-band signal transmitted by WD$_1$ with $E[|x_3(t)|^2] =1$. The received signals at WD$_2$ and the HAP in this time slot are expressed as
\begin{equation}
\label{y203}
y_2^{(3)}(t)=\alpha_{12}\sqrt{P_3}{x_3(t)}+n_2^{(3)}(t),
y_0^{(3)}(t)=\alpha_1\sqrt{P_3}{x_3(t)}+n_0^{(3)}(t),
\end{equation}
where $n_2^{3}(t)$ and $n_0^{(3)}(t)$ denote the receiver noises.

During the last time slot of duration $t_4$, the WD$_2$ first relays WD$_1$'s message to the HAP and then transmits its own message. Specifically, we denote the transmit power and time for relaying WD$_1$'s message as $P_{41}$ and $t_{41}$, and those for transmitting its own message as $P_{42}$ and $t_{42}$. Then, the total energy consumed by WD$_2$ is constrained as
\begin{equation}
\label{con}
t_{41}P_{41}+t_{42}P_{42}\leq E_2^{(1)}+E_2^{(2)}.
\end{equation}

Denote the time allocations as $\mathbf{t}=[t_1,t_2,t_3,t_{41},t_{42}]$, and the transmit power values $\mathbf{P}=[P_1,P_2,P_3,P_{41},P_{42}]$. For simplicity of illustration, we assume that the receiver noise power is $N_0$ at all receiver antennas except for the additional noise $n_s(t)$ introduced in the power splitter, whose power equals to $N_s$. Then, let $R_1^{(2)}(\mathbf{t},\mathbf{P}),R_1^{(3)}(\mathbf{t},\mathbf{P})$ and $R_1^{(4)}(\mathbf{t},\mathbf{P})$ denote the achievable rates of transmitting WD$_1$'s message from WD$_1$ to WD$_2$, from WD$_1$ to the HAP, and to the HAP relayed by WD$_2$, respectively, which are given by
\begin{equation}
\label{Ry12}
R_1^{(2)}(\mathbf{t},\mathbf{P})=t_3\log_{2}\left(1+\frac{P_3h_{12}}{N_0}\right),
R_1^{(3)}(\mathbf{t},\mathbf{P})=t_3\log_{2}\left(1+\frac{P_3h_1}{N_0}\right),
\end{equation}
\begin{equation}
\label{Ry14}
R_1^{(4)}(\mathbf{t},\mathbf{P})=t_{41}\log_{2}\left(1+\frac{P_{41}h_2}{N_0}\right).
\end{equation}

Thus, the achievable rate of WD$_1$ within the time slot of length $T=1$ can be expressed as\cite{2014:Ju2}
\begin{equation}\label{11}
 R_1(\mathbf{t},\mathbf{P}) = \min [R_1^{(1)}(\mathbf{t})+R_1^{(2)}(\mathbf{t},\mathbf{P}),R_1^{(3)}(\mathbf{t},\mathbf{P})+R_1^{(4)}(\mathbf{t},\mathbf{P})],
\end{equation}
and the achievable rate of WD$_2$ is
\begin{equation}
\label{22}
R_2(\mathbf{t},\mathbf{P})=t_{42}\log_{2}\left(1+\frac{P_{42}h_2}{N_0}\right).
\end{equation}
\section{Common Throughput Maximization}
In this paper, we focus on maximizing the minimum (max-min) throughput of the two users by jointly optimizing the time allocated to the HAP, WD$_1$ and WD$_2$ ($\mathbf{t}$), and power allocation $\mathbf{P}$, i.e.,
\begin{equation}
\label{1}
   \begin{aligned}
    (\rm{P1}): & \max_{\mathbf{t},\mathbf{P}} & &  \min(R_1(\mathbf{t},\mathbf{P}),R_2(\mathbf{t},\mathbf{P}))\\
    &\text{s. t.}    & &(\ref{t}) , (\ref{P3}),\rm{and}  (\ref{con}),\\
    & & & t_1,t_2,t_3,t_{41},t_{42}\geq 0,\\
    & & & P_2,P_3,P_{41},P_{42}\geq 0.
   \end{aligned}
\end{equation}
Noticed that if we set $t_2=0,t_{41}=0$ and $P_2=0,P_{41}=0$. Then (P1) reduces to the special case of  WPCN without cooperation, i.e., the near user WD$_2$ does not help the far user WD$_1$ with relaying its information to the HAP.

 (P1) is non-convex in the above form due to the multiplicative terms in (\ref{con}). To transform (P1) into a convex problem, we introduce auxiliary  variables $\tau_{41}=t_{41}P_{41}$ and $\tau_{42}=t_{42}P_{42}$. With $P_3$ in (\ref{P3}), $R_1^{(2)}(\mathbf{t},\mathbf{P}), R_1^{(3)}(\mathbf{t},\mathbf{P}), R_1^{(4)}(\mathbf{t},\mathbf{P})$ in (\ref{Ry12})-(\ref{Ry14}) can be re-expressed as function of $\mathbf{t}$, and $R_2(\mathbf{t},\mathbf{P})$ in (\ref{22}) can be re-expressed as function of $\mathbf{t}$ and $\boldsymbol{\tau}=[\tau_{41},\tau_{42}]$, i.e.,
\begin{equation}
\label{Ry122}
R_1^{(2)}(\mathbf{t})=t_3\log_{2}\left(1+{\rho_1^{(2)}}{\frac{t_1}{t_3}}\right),
R_1^{(3)}(\mathbf{t})=t_3\log_{2}\left(1+{\rho_1^{(3)}}{\frac{t_1}{t_3}}\right),
\end{equation}
\begin{equation}
\label{Ry144}
R_1^{(4)}(\mathbf{t},\boldsymbol{\tau})=t_{41}\log_{2}\left(1+{\rho_2}{\frac{\tau_{41}}{t_{41}}}\right),
R_2(\mathbf{t},\boldsymbol{\tau})=t_{42}\log_{2}\left(1+{\rho_2}{\frac{\tau_{42}}{t_{42}}}\right),
\end{equation}
where $\rho_1^{(2)}=h_1h_{12}{\frac{\eta P_1}{N_0}}$, $\rho_1^{(3)}={h_1^2{\frac{\eta P_1}{N_0}}}$, $\rho_2=\frac{h_2}{N_0}$ are constant parameters.

Accordingly, by introducing another auxiliary variable $\bar{R}$, (P$_1$) can be equivalently transformed into the following epigraph form:
\begin{equation}
\label{2}
 \begin{aligned}
    (\rm{P2}): & \max_{\overline{R},\mathbf{t},\boldsymbol{\tau}} & &  \overline{R} \\
    &\text{s. t.}    & & t_0+t_1+t_2+t_3+t_{41}+t_{42}\leq 1, \\
    & & & \tau_{41}+\tau_{42}\leq E_2^{(1)}+E_2^{(2)},\\
    & & & \overline{R}\leq R_1^{(1)}(\mathbf{t})+R_1^{(2)}(\mathbf{t}),\\
    & & & \overline{R}\leq R_1^{(3)}(\mathbf{t})+R_1^{(4)}(\mathbf{t},\boldsymbol{\tau}),\\
    & & & \overline{R}\leq R_2(\mathbf{t},\boldsymbol{\tau}).
 \end{aligned}
\end{equation}
Notice that $R_1^{(2)}(\mathbf{t}),R_1^{(3)}(\mathbf{t}),R_1^{(4)}(\mathbf{t},\boldsymbol{\tau})$ and $R_2(\mathbf{t},\boldsymbol{\tau})$ are all concave functions (see the proof in \cite{2014:Ju2}), therefore (P2) is a convex optimization problem, which can be easily solved by off-the-shelf convex optimization algorithms, e.g., interior point method. Then, after obtaining the optimal $\boldsymbol{\tau}^*$ and $\mathbf{t}^*$ in (P2), the optimal $\mathbf{P}^*$ in (P1) can be easily retrieved by setting $P_{41}^* = \tau_{41}^*/t_{41}^*$ and  $P_{42}^* = \tau_{42}^*/t_{42}^*$.

\section{Simulation Results}

In this section, we use simulations to evaluate the performance of the proposed cooperation method. In all simulations, we use the parameters of Powercast TX91501-1W transmitter as the energy transmitter at the HAP and those of P2110 Power harvester as the energy receiver at each WD with $\eta=0.6$ energy harvesting efficiency. Without loss of generality, it is assumed that the noise power is set $N_0=10^{-10}$ W for all receivers, the introduced additional noise power for ID circuit in Fig.~\ref{Fig.3} is $N_s=10^{-10}$ W. The channel gain $h_i=G_A(\frac{3\times10^8}{4\pi d_i f_c})^\lambda$, where $d$ denotes the distance separation between two devices, e.g., HAP-to-WD distance or the distance between the two WDs. $G_A=2$ denotes the antenna power gain, $\lambda=2$ denotes the path-loss factor, $\beta$ =0.8 denotes the power splitting factor, $\mu=0.8$ is set as a fixed backscatter reflection coefficient and $f_c=915$ MHz denotes the carrier frequency.

Fig.~\ref{Fig.4} compares the achievable max-min throughput of different schemes when the inter-user channel $h_{12}$ varies. In this case, the HAP and the two users are assumed to lie on a straight line in which the near user WD$_2$ is in the middle with $d_{12}=d_1-d_2$. Here, we fix $d_2=3$ meters and vary $d_1$ from 6 to 10 meters. Besides, we consider two different AB communication rates $R_b=5$ kbps, 50 kbps. Evidently, the throughput performance decreases with $d_1$ for all the methods due to the worse inter-user channel $h_{12}$. Besides, both user cooperation methods, either with or without AB communication, outperforms the independent transmission scheme. For the two cooperation methods, when $R_b = 50$ kbps, the proposed AB-assisted cooperation outperforms the one without AB communication when $d_1>6.8$, but produces worse performance otherwise. Similar result is also observed when $R_b =5$ kbps, where the proposed method has better performance when the inter-user channel is relatively weak. This is because when the far user WD$_1$ moves more away from the HAP, it suffers from more severe attenuation in both energy harvesting and information transmission to WD$_2$. Therefore, the optimal solution allocates more time to both WET and information exchange from WD$_1$ to WD$_2$ if AB communication is not used. The application of AB communication can effectively reduce the energy and time consumed on information exchange, thus can improve the overall throughput performance.
\begin{figure}[t]
\vspace{-0.4cm}
\setlength{\abovecaptionskip}{-0.1cm}
\setlength{\belowcaptionskip}{-1cm}
  \centering
   \begin{center}
      \includegraphics[width=0.7\textwidth]{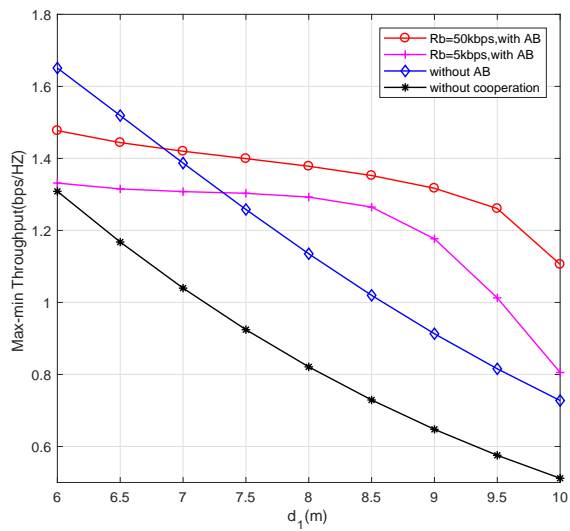}
   \end{center}
  \caption{The impact of inter-user channel ($h_{12}$) to the optimal throughput performance}
  \label{Fig.4}
\end{figure}
\begin{figure}
\vspace{-0.6cm}
\setlength{\abovecaptionskip}{-0.2cm}
\setlength{\belowcaptionskip}{-1cm}
  \centering
   \begin{center}
      \includegraphics[width=0.7\textwidth]{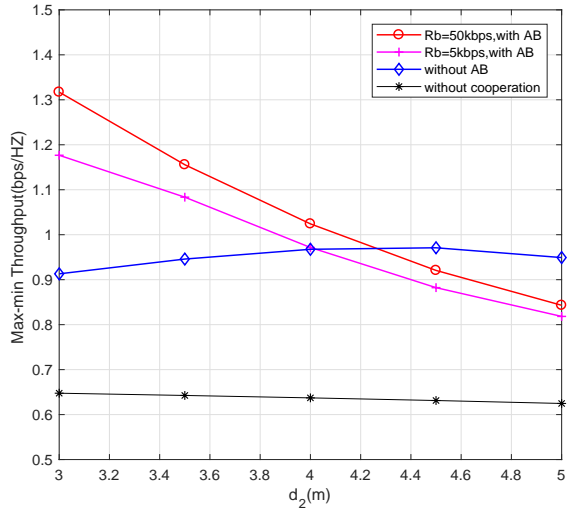}
   \end{center}
  \caption{The impact of relaying channel ($h_2$) to the optimal throughput performance }
  \label{Fig.5}
\end{figure}

Fig.~\ref{Fig.5} shows the impact of the HAP-to-WD$_2$ (relaying) channel to the optimal throughput performance. Here, We set $d_1=9$ meters, and vary $d_2$ from 3 to 5 meters. Noticed that the performance of non-cooperation scheme hardly changes as $d_2$ increases, this is because its throughput is mainly constrained by the weak channel between the far user WD$_1$ to HAP. It is observed that the proposed user cooperation has better performance than the one without AB communication when the helping relay is close to the HAP ($d_2$ is small). Again, this is because when $d_2$ is small, the separation between the two WDs is large thus the inter-user channel is weak. Therefore, WD$_1$ needs to consume significant amount of energy if transmitting actively to the helping WD. The use of AB-assisted cooperation can effectively reduce the energy consumptions and thus improve the throughput performance. The simulation results in Fig.~\ref{Fig.4} and Fig.~\ref{Fig.5} demonstrate the advantage of applying AB communication to improve the throughput performance of user cooperation in WPCN under various practical setups, especially when the inter-user channel is relatively weak.

\section{Conclusions}
In this paper, we proposed a novel user cooperation method using AB communication in a two-user WPCN. In particular, we studied the maximum common throughput optimization problem of the proposed model, and proposed efficient method to obtain the optimal solution. By comparing with representative benchmark methods, we showed that the proposed AB-assisted cooperation can effectively improve the throughput fairness performance in WPCN.

\end{document}